\declaretheorem[parent=section,name=Theorem]{thm}
\declaretheorem[style=definition,sibling=thm]{definition}
\declaretheorem[sibling=thm]{proposition}
\declaretheorem[sibling=thm]{corollary}
\begin{document}
\title{Relating compatibility and divisibility of quantum channels}

\author{Cristhiano Duarte}
\affiliation{Wigner Research Centre for Physics, H-1121, Budapest Hungary}
\affiliation{Institute for Quantum Studies, Chapman University, One University Dr., Orange, CA 92866, USA}
\affiliation{International Institute of Physics, Federal University of Rio Grande do Norte, 59070-405 Natal, Brazil}
\email[Corresponding author: ]{crsilva@chapman.edu}
\author{Lorenzo Catani}
\affiliation{Schmid College of Science and Technology, Chapman University, One University Drive, Orange, CA, 92866, USA}
\author{Raphael C. Drumond}
\affiliation{Departamento de Matem\'{a}tica, Instituto de Ci\^{e}ncias Exatas, Universidade Federal de Minas Gerais, 30123-970, Belo Horizonte, Minas Gerais, Brazil}

\date{\today}
\begin{abstract}


We connect two key concepts in quantum information: compatibility and divisibility of quantum channels. Two channels are \textit{compatible} if they can be both obtained via marginalization from a third channel. A channel \textit{divides} another channel if it reproduces its action by sequential composition with a third channel. (In)compatibility is of central importance for studying the difference between classical and quantum dynamics. The relevance of divisibility stands in its close relationship with the onset of Markovianity.
We emphasize the simulability character of compatibility and divisibility, and, despite their structural difference, we find a set of channels -- self-degradable channels -- for which the two notions coincide. We also show that, for degradable channels, compatibility implies divisibility, and that, for anti-degradable channels, divisibility implies compatibility. These results motivate further research on these classes of channels and shed new light on the meaning of these two largely studied notions.

\end{abstract}

\maketitle

\section{Introduction}\label{Sec.Introduction}

Quantum theory is so far the best available framework to describe the microscopic world. It consists of a normative set of assumptions and rules designed to deal with phenomena with no classical explanation. Its predictive power is undeniable. Nonetheless, it also opens up room for puzzling behaviours having no classical counterpart~\cite{Bell66, Bell64, Kochen67, WZ82, JL15}. The root cause of all these remarkably odd phenomena is still unclear~\cite{Spekkens07}, and even quantum-to-classical and classical-to-quantum transitions are yet to be fully understood -- although, from a dynamical perspective, there have been robust proposals for obtaining such understanding~\cite{Zurek09,BZ06,BPH15}. 

What is undebatable, though, is that incompatibility of observables is a key feature of quantum theory ~\cite{WPGF09}. In this work we focus on the generalization of this notion to quantum channels. Intuitively, \textit{compatibility} between two quantum channels can be seen as on-demand simulability of those channels via a third, larger CPTP map~\cite{HMZ16, HM17, HKRS15, GPS20, CHMT19, Mori20}. In other words, this third channel contains, at all times, the information about the two original maps. Although we do not rule out the natural parallel with joint measurability~\cite{GBTCA17, WPGF09, BAN20},  we do not define the compatibility of two compatible channels as the fact that they can be realized simultaneously. Rather than saying that two compatible channels can be jointly realizable, we define compatible quantum channels as those channels that can be recovered from a third quantum map via marginalization (figure \ref{Fig.DefCompatibility}) ~\cite{GPS20}. It is within this perspective that we associate compatibility and simulability. 
\begin{figure}
\includegraphics[scale=0.25]{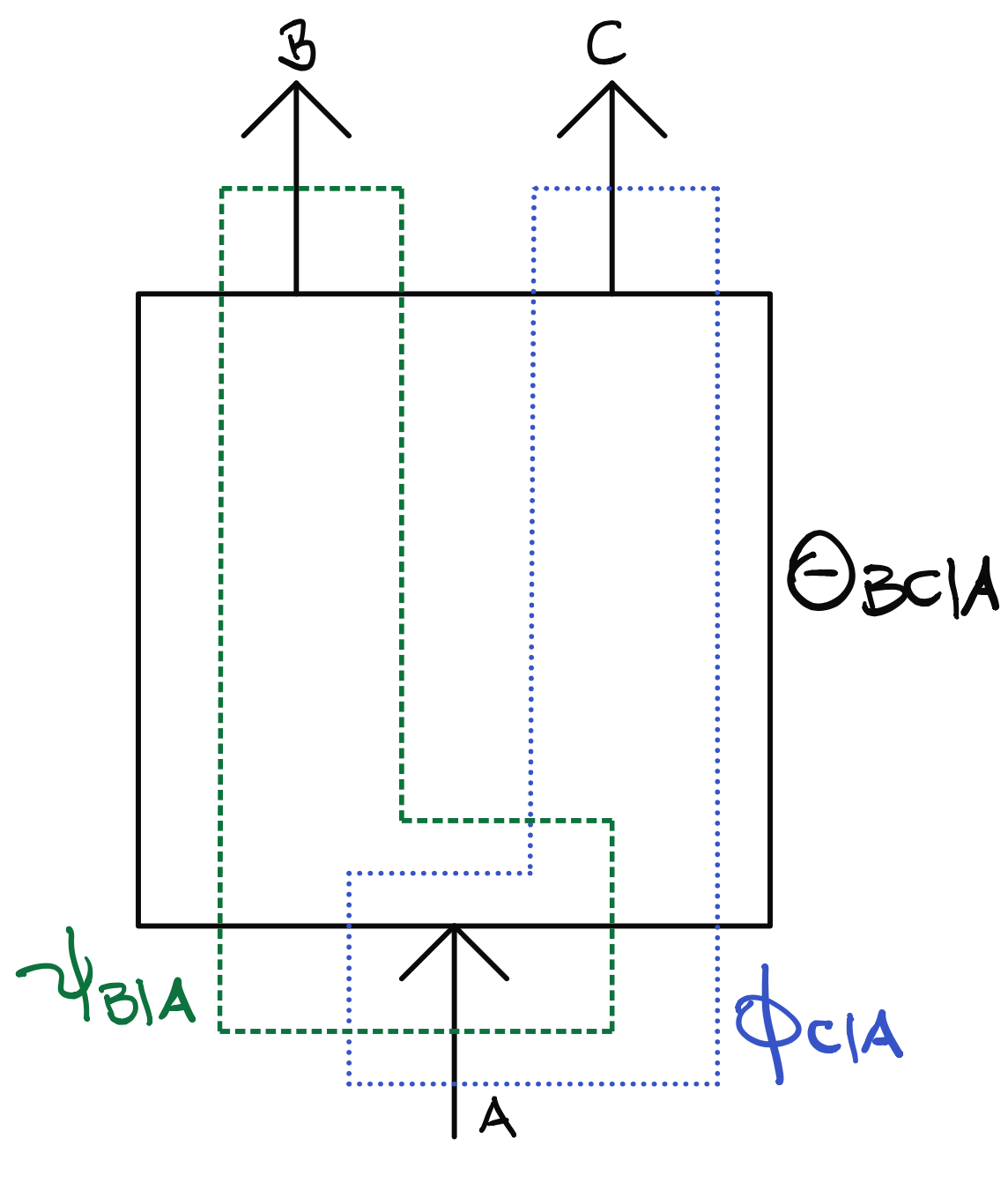}
\caption{Schematic box-representation of compatible maps $\psi_{B|A}$ and $\phi_{C|A}$. Their compatibilizer is the larger black box representing $\theta_{BC|A}$ (colours online).\label{Fig.DefCompatibility}}
\end{figure}

Our second main concept, \textit{divisibility}, has a long history within the open quantum dynamics community~\cite{GKS76,Lindblad79,KW05,MPM17}, where it is traditionally equaled to memoryless, or Markovian, processes~\cite{MKPFM19, BD16, RHP14}. Classical Markovian processes are governed by the Chapman-Kolmogorov equations and it turns out that a \textit{divisible dynamics} is described by a functional expression that satisfies a similar set of equations ~\cite{MSPM20}. More recently, it has become clearer that memoryless quantum processes deserve a different and less involved treatment~\cite{PollockEtAl18,CTZ08} and that divisibility makes no explicit reference to memory. We say that a quantum channel $\Psi$ divides another channel $\Phi$ when it is possible to find a third CPTP map $\theta$ such that $\Phi=\theta \circ \Psi$, where $\circ$ denotes sequential composition (figure \ref{Fig.DefDivisibility}).  In this sense, it is clear that divisibility of maps should not be seen  as characterizing memory, but rather as a signal of how one can simulate the action of a given map with the aid of two others.  
\begin{figure}
\includegraphics[scale=0.17]{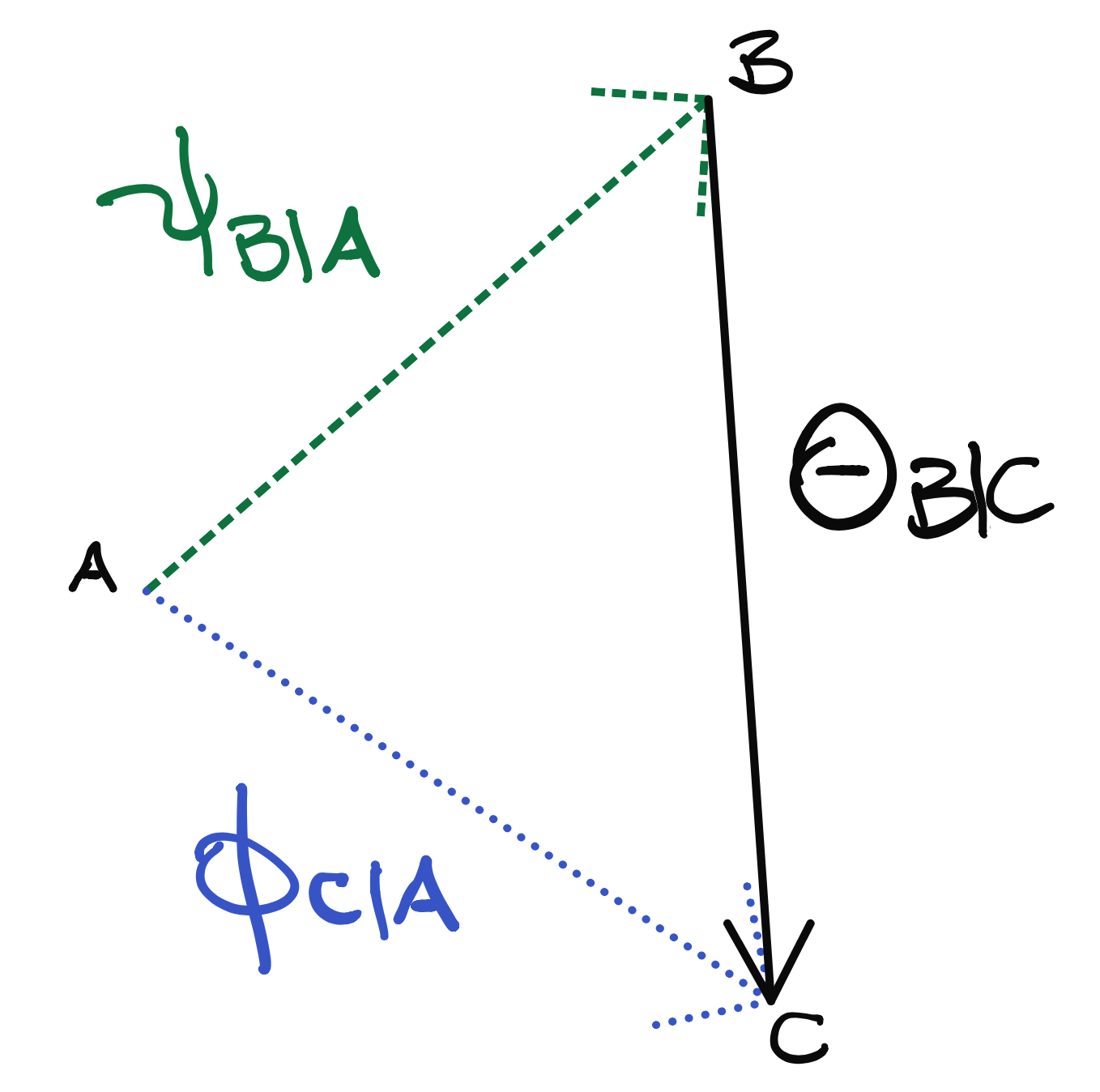}
\caption{Schematic arrow-representation  for $\psi_{B|A}$ dividing $\phi_{C|A}$. The quotient map $\theta_{C|B}$ is represented by the black arrow from $\mathcal{H}_{B}$ to $\mathcal{H}_{C}$ and verifies $\phi_{C|A}=\theta_{C|B} \circ \psi_{B|A}$ (colours online.)\label{Fig.DefDivisibility}}
\end{figure}

So far we have emphasized the simulability aspects exhibited by compatibility and divisibility. Undeniably, there are many other intriguing features associated with both concepts~\cite{WPGF09,DAMP05,BarnumEtAl96} , however, in this work, it is exactly the idea that they allow for some sort of simulability that we want to put forward and use as a motivation to bring them together. Compatibility amounts for the existence of a third quantum channel holding all the necessary information to simulate the original channels, whereas divisibility says that one of the maps can be simulated as a post-processing of the other. In both it is clear the existence of an external channel whose use can reveal the action of (at least one of) the original channels. Simply put, the former can be seen as a sort of simulability in parallel, and the latter as expressing a sequential-like simulability.  

Even in the light of this shared simulation feature, at a first glance, one may consider compatibility and divisibility as radically different concepts, with radically distinct body of applications. Our contribution comes to show that this is not entirely the case. Depending on the physical properties of the involved channels and on how the environment exchanges information with the evolving system, it might be the case that compatibility \emph{implies} divisibility, or that divisibility \emph{implies} compatibility, or  that compatibility  \emph{equals} divisibility. To obtain a full equivalence, we will have to touch on a third and unexpected ingredient: self-degradable channels~\cite{SSW08,CRS08,BDHM10, Holevo2007}. These are channels that allow one to obtain information about the evolution of the system looking only at the evolution of the environment and vice versa (see sec.~\ref{SubSec.ModifiedConjectures} for the precise definition).

This paper is organized as follows. In sec.~\ref{Sec.Definitions} we define compatibility and divisibility. Sec.~\ref{Sec.Conjectures} makes explicit that a connection for generic channels is not possible, as neither compatibility directly implies divisibility nor divisibility implies compatibility; we exemplify each case and discuss the reasons of this impossibility. It is in sec.~\ref{Sec.ConjecturesProofs} that we start establishing the desired bridge between the two concepts. To do so, firstly we define a very important property of quantum channels -- degradability. Secondly, in subsec.~\ref{SubSec.ModifiedConjectures}, we modify our initial attempts, bring in the idea of (self-)degradable channels, and establish our main result. In subsec.~\ref{SubSec.DiscussionCorrect} we provide a physical explanation for why degradability is the key concept that makes the connection possible. Finally, sec.~\ref{Sec.Conclusion} is where we conclude our work. The structure of the manuscript is chosen with the goal to help the reader to understand why the connection between compatibility and divisibility is so surprising.

\section{Main Definitions} \label{Sec.Definitions}

To facilitate the reading, we will stick with the notation of refs.~\cite{LS13, Leifer2014}. In other words, 
\begin{equation}
\psi_{Y|X}: \mathcal{D}(\mathcal{H}_{X}) \rightarrow \mathcal{D}(\mathcal{H}_{Y}) 
\end{equation}
 will always represents a quantum map from $\mathcal{D}(\mathcal{H}_{X})$ to $\mathcal{D}(\mathcal{H}_{Y})$. The sub-index $Y|X$ is in parallel with the standard conditional probability language and ought to be read as \textit{Y given X}.

The Jamio\l kowski isomorphic image of each quantum channel $\psi_{Y|X}$ is a non-normalized state in $\mathcal{D}(\mathcal{H}_{X}) \otimes \mathcal{D}(\mathcal{H}_{Y})$ defined via:
\begin{equation}
J_{\psi_{Y|X}}:=\sum_{i,j=1}^{\mbox{dim}(\mathcal{H}_{X})}\ket{i}\bra{j} \otimes \psi_{Y|X}(\ket{i}\bra{j}).
\label{Eq.DefJamiolkowski}
\end{equation}
We refer to~\cite{Watrous2018} for an introductory discussion about the (Choi-)Jamio\l kowiski operator. Further properties of the isomorphism can be found in~\cite{Gour19,LS13}.

The sections below contain the standard definitions of divisibility and compatibility. These two topics are sub-branches of quantum information and lines of research in their own, and, as such, we will not treat them comprehensively here. To help the reader, we will briefly motivate the main definitions and point to the standard literature.

\subsection{Divisibility}\label{SubSec.Divisibility}

 It has become broadly accepted that discrete evolution of open quantum systems should be represented by a family $\mathcal{F}=\{\psi_{X_{0}|X_{k}}\}_{k=1}^{T}$ of CPTP maps~\cite{BD16, NC10,MPM17}. In this family, each channel 
\begin{equation}
 \psi_{X_{0}|X_{k}}: \mathcal{D}(\mathcal{H}_{X_{0}}) \rightarrow \mathcal{D}(\mathcal{H}_{X_{k}})
\end{equation}
represents a quantum process happening from a initial time step $t=t_0$ to a posterior time $t=t_k$. Given a quantum system evolving according to the family $\mathcal{F}$, if $\rho_{0}$ represents the state of this system at $t=t_0$, its evolved state at $t=t_k$ is given by $\rho_{k}=\psi_{X_{0}|X_{k}}(\rho_{0})$.

This definition captures the idea that there is some quantum process happening between two time steps. However, the $t_0$-to-$t_k$ evolution is not natural in a dynamical sense. Instead of having a family dictating the evolution from $t_0$ to $t_k$, one would expect to have a sequential family of CPTP maps, say $\mathcal{F}^{\prime}=\{\psi_{X_{k+1}|X_{k}}\}_{k=0}^{T-1}$, governing how the system changes between to consecutive time steps $t_k$-to-$t_{k+1}$. It is the existence of this second collection of CPTP maps that divisibility, or Markovianity, talks about~\cite{BD16, RHP14, MKPFM19}.

\begin{definition} [CP-Divisibility]
Let $\mathcal{F}=\{\psi_{X_{0}|X_{k}}\}_{k=1}^{T}$ be a family of  maps representing the dynamics of an open quantum system. Each map
\begin{equation}
 \psi_{X_{0}|X_{k}}: \mathcal{D}(\mathcal{H}_{X_{0}}) \rightarrow \mathcal{D}(\mathcal{H}_{X_{k}})
\end{equation}
being a completely positive trace preserving map. We say that this family is divisible, or that the dynamics is Markovian,  whenever there exists another family of maps $\mathcal{F}^{\prime}=\{\psi_{X_{k+1}|X_{k}}\}_{k=0}^{T-1}$ such that
\begin{enumerate}
\item $\forall \,\, k:$ $\psi_{X_{k+1}|X_{k}}: \mathcal{D}(\mathcal{H}_{X_{k}}) \rightarrow \mathcal{D}(\mathcal{H}_{X_{k+1}})$ is CPTP
\item $\forall \,\, k:$ $\psi_{X_{k+1}|X_{0}}=\psi_{X_{k+1}|X_{k}} \circ \psi_{X_{k}|X_{0}}$.
\end{enumerate}
\label{Def.Divisibility}
\end{definition}
 
In other words, def.~\ref{Def.Divisibility} says that a family of $t_0$-to-$t_k$ quantum channels is divisible when it is possible to find intermediate channels $t_{k}$-to-$t_{k+1}$ that ``match'' the original family. This matching is expressed by stating that it is possible to simulate any process happening from $t_{0}$-to-$t_{k+1}$ provided that it is known what happens in $t_0$-to-$t_k$ and $t_k$-to-$t_{k+1}$.  Fig.~\ref{Fig.CPTP_Maps_Divisibility} illustrates this argument.

\begin{figure}
    \includegraphics[scale=0.25]{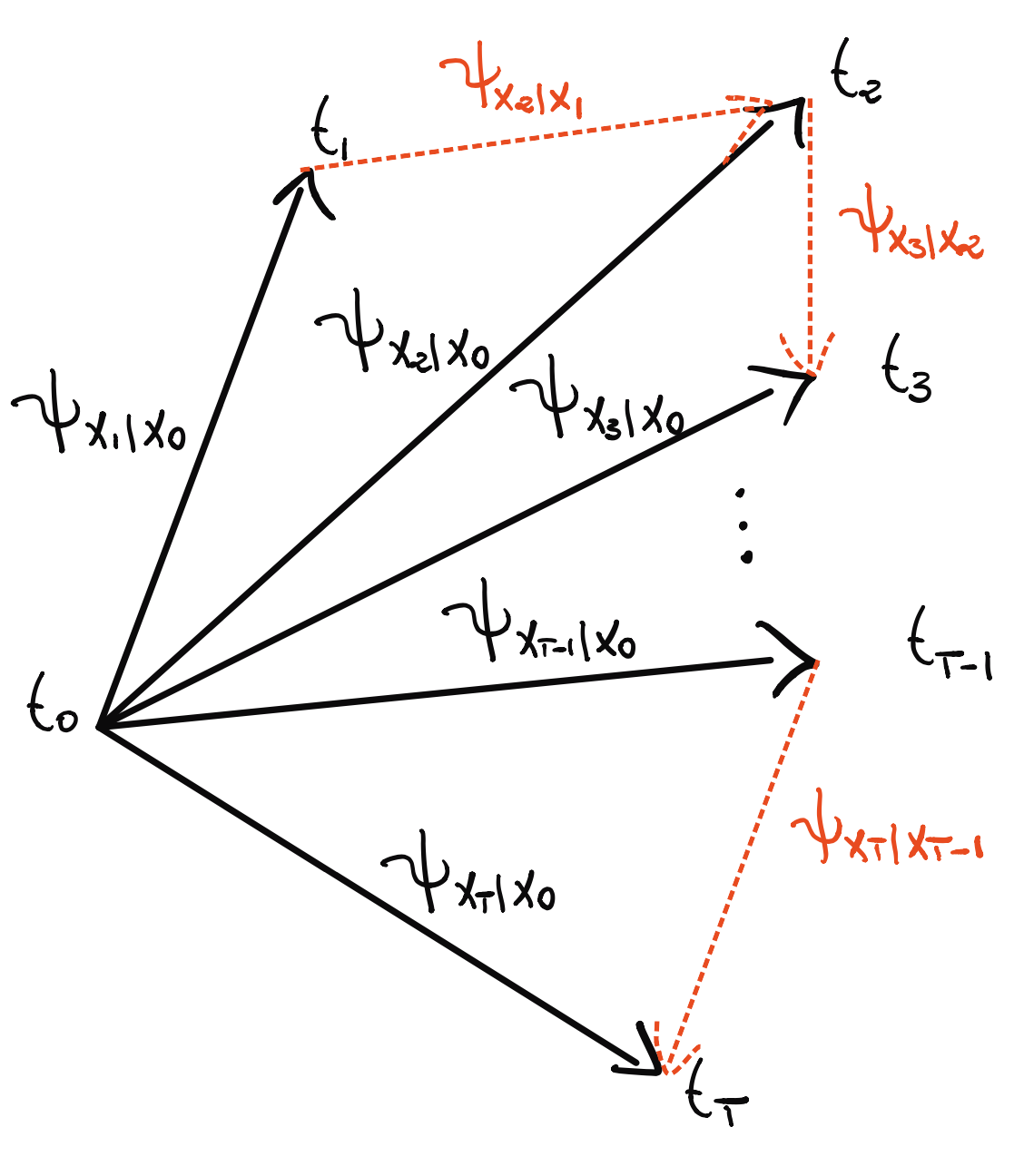}
    \caption{Arrow-like representation of a Markovian quantum evolution. Black solid arrow mean $t_0$-to-$t_k$ quantum processes. Orange dashed arrows are the intermediate $t_k$-to-$t_{k+1}$ quantum processes (colours online). \label{Fig.CPTP_Maps_Divisibility}}
\end{figure}

\textbf{Remark:} In def.~\ref{Def.Divisibility} we used the term Markovianity also to express memoryless processes. This is still a topic of ongoing research and intense debate~\cite{MKPFM19, PollockEtAl18, SRS19}. Although we are more inclined to agree with ref.~\cite{MKPFM19}, which expresses more cautiousness in associating divisibility and memoryless processes, the lack of memory of divisible processes is often formulated drawing a parallel with the classical case. Basically, a classical stochastic process is a family $\{X_k\}_{k}$ of random variables where it is possible to define a joint $\sigma-$algebra. This process is Markovian when $\mathbb{P}(X_k|X_{k-1},...,X_1) = \mathbb{P}(X_k|X_{k-1})$ holds true for every $k$. In plain English, the information necessary do determine what will happen at $t_k$ is fully determined by the knowledge of what happened at $t_{k-1}$, hence a memoryless process. The fact is that Markovian processes have a neat equation governing $\mathbb{P}(X_k|X_{k-1})$ that only involves intermediate time-steps: the Chapman-Kolmogorov equation~. Roughly speaking, if $\{X_k\}_{k}$ is Markovian, then for every $t>k>r$~\cite{MSPM20}:
\begin{equation}
\mathbb{P}(X_t|X_r)=\sum_{x_k}\mathbb{P}(X_t|X_{k}=x_k)\mathbb{P}(X_k=x_k|X_{r}).
\label{Eq.ChapmanKolomogorov}
\end{equation}
It is exactly the parallel between eq.~\eqref{Eq.ChapmanKolomogorov} and def.~\ref{Def.Divisibility} that justifies the comparison between the classical and quantum cases, and motivates the use of the name ``Markovianity'' for the quantum case. We refer to~\cite{MKPFM19} for a thorough analysis of this topic.

Specializing to the case of only two maps, we can write down a more direct definition of divisibility. As a matter of fact, this variant definition shows not only an asymmetry which is always hidden away in the usual definition of divisibility, but it also emphasizes another mathematical aspect present in the original definition. 

\begin{definition}
Let $\psi_{B|A}$ and $\phi_{C|A}$ be two cptp maps. We say that $\psi_{B|A}$ \textit{divides} $\phi_{C|A}$ whenever there exists another cptp map $\theta_{C|B}$ such that 
\begin{equation}
\phi_{C|A}=\theta_{C|B} \circ \psi_{B|A}.
\label{Eq.DefDivisibilityTwoMaps}
\end{equation}
We will refer to $\theta_{C|B}$ as the quotient map.
\label{Def.DivisibilityTwoMaps}
\end{definition}

When considering only two maps, the alternative def.~\ref{Def.DivisibilityTwoMaps} makes clearer the simulability rationale: to obtain the action of $\phi_{C|A}$ it suffices to know $\theta_{C|B}$ and $\psi_{B|A}$. Broadly speaking, we can interpret the map $\theta_{C|B}$ as containing the piece of information that is necessary to define $\phi_{C|A}$ via composition with $\psi_{B|A}$. We will adopt a similar perspective when interpreting compatibility in the next section.

\subsection{Compatibility}\label{SubSec.Compatibility}

One of the foundational cornerstones of quantum theory is the existence of observables that cannot be simultaneously, or jointly, measured. In other words, quantum theory does not rule out the presence of incompatible observables. This fact is very well known, and goes all the way back to the Heisenberg's uncertainty principle. Moreover, we can say that incompatibility is crucial to the foundations of quantum physics, as incompatible measurements are the root cause allowing for violations of Bell inequalities~\cite{WPGF09}. As a matter of fact, quantum (in)compatibility is not restricted only to POVM's. This short section shows how to lift the concept of incompatibility from measurements to quantum channels. 

Broadly speaking, two quantum channels are compatible when they can be enlarged, seen as being just one, and have their action recovered by tracing out part of this larger structure. The idea behind the concept of compatibility, therefore, is associated to the existence of a larger quantum channel that can, on-demand, be used to recover the action of the two original quantum channels. Precisely put~\cite{HMZ16},

\begin{definition}[Compatibility]
Let $\psi_{B|A}$ and $\phi_{C|A}$ be two cptp maps. We say that $\psi_{B|A}$ \textit{is compatible} with $\phi_{C|A}$ whenever there exists another cptp map 
\begin{equation}
\theta_{BC|A}: \mathcal{D}(\mathcal{H}_{A}) \rightarrow \mathcal{D}(\mathcal{H}_{B}) \otimes \mathcal{D}(\mathcal{H}_{C})
\end{equation}
such that, for every $\rho_{A}$ in $\mathcal{D}(\mathcal{H}_{A})$, the following equations
\begin{equation}
 \psi_{B|A}(\rho_{A})=\mbox{Tr}_{C}\left[\theta_{BC|A}(\rho_{A}) \right]
 \label{Eq.DefCompatibility1}
\end{equation}
\begin{equation}
 \phi_{C|A}(\rho_{A})=\mbox{Tr}_{B}\left[\theta_{BC|A}(\rho_{A}) \right]
 \label{Eq.DefCompatibility2}
\end{equation}
hold true. The channel $\theta_{BC|A}$ is called the compatibilizer for $\psi_{B|A}$ and $\phi_{C|A}$. When it is not possible to find a channel satisfying eqs.~\eqref{Eq.DefCompatibility1} and~\eqref{Eq.DefCompatibility2} above, we say that $\psi_{B|A}$ and $\phi_{C|A}$ are \textit{incompatible}.
\label{Def.Compatibility}
\end{definition} 

\textbf{Example 1: }The most paradigmatic example of incompatibility, that will be used later on, is related to the no-broadcasting theorem~\cite{HM17, BarnumEtAl96, GPS20}: consider $\psi_{A|A}=\mbox{id}_{A}=\phi_{A|A}$, where $\mbox(id)_{A}:\mathcal{D}(\mathcal{H}_{A}) \rightarrow \mathcal{D}(\mathcal{H}_{A})$ is the identity channel. If there existed a compatibilizer for this case, there would exist a linear map $\theta_{AA|A}$ such that:
\begin{equation}
\forall \, \rho_{A} \in \mathcal{D}(\mathcal{H}_{A}): \rho_{A} = \mbox{Tr}_{A}\left[\theta_{AA|A}(\rho_{A})\right],
\end{equation}
which would, in turn, imply a universal broadcast machine. We are forced to conclude that the identity map id$_{A}$ is not self-compatible~\cite{HM17}. This example shows also that compatibility of quantum channels is mathematically rich and complex, despite having a simple definition and a very straightforward  information theoretical interpretation.

\textbf{Example 2:} The second example shows how we can construct as many compatible channels as we want and, on top of that, the following construction also explicitly exhibits the compatibilizer. Given two non-trivial and finite-dimensional Hilbert spaces $\mathcal{H}_{B}$ and $\mathcal{H}_{C}$, define $\mathcal{H}_{A}:=\mathcal{H}_{B} \otimes \mathcal{H}_{C}$. Now, chose $\tilde{\psi}_{B|B}$ and $\tilde{\phi}_{C|C}$ arbitrarily. Finally, define ${\psi}_{B|A}$ and ${\psi}_{C|A}$ acting on $\mathcal{D}(\mathcal{H}_{A})$ via:
\begin{equation}
\psi_{B|A}(\rho_{A}):=\left(\tilde{\psi}_{B|B} \circ \mbox{Tr}_{C} \right) \left( \rho_{A} \right)
\label{Eq.ExampleFernandinho1}
\end{equation}
\begin{equation}
\phi_{C|A}(\rho_{A}):=\left(\tilde{\phi}_{C|C} \circ \mbox{Tr}_{B} \right) \left( \rho_{A} \right),
\label{Eq.ExampleFernandinho2}
\end{equation}
for every $\rho_A\in\mathcal{H}_{A}$.
Now, the main claim is that ${\psi}_{B|A}$ and ${\psi}_{C|A}$ are compatible and, additionally, that $\tilde{\psi}_{B|B} \otimes \tilde{\phi}_{C|C}$ is their compatibilizer. As a matter of fact:
\begin{align}
      \forall \,\, &\rho_{A} \in \mathcal{D}(\mathcal{H}_{A}): \mbox{Tr}_{C}\left[ (\tilde{\psi}_{B|B} \otimes \tilde{\phi}_{C|C}) \rho_{A} \right]  =    \nonumber \\
      & = \sum_{i,j}\sum_{k,l}\alpha_{ij}^{kl}  \mbox{Tr}_{C}\left[ (\tilde{\psi}_{B|B} \otimes \tilde{\phi}_{C|C}) \ket{i}\bra{j}_{B} \otimes \ket{k}\bra{l}_{C} \right] \nonumber \\
      &= \sum_{i,j}\sum_{k,l}\alpha_{ij}^{kl}  \mbox{Tr}_{C}\left[ \tilde{\phi}_{C|C} (\ket{k}\bra{l}_{C}) \right] \tilde{\psi}_{B|B}(\ket{i}\bra{j}_{B}) \\
      &= \sum_{i,j}\sum_{k}\alpha_{ij}^{kk} \tilde{\psi}_{B|B}\left(\ket{i}\bra{j}_{B}\right) =  \tilde{\psi}_{B|B}\left(\sum_{i,j}\sum_{k}\alpha_{ij}^{kk}\ket{i}\bra{j}_{B}\right) \nonumber \\
      &= \left(\tilde{\psi}_{B|B} \circ \mbox{Tr}_{C} \right) \left( \rho_{A} \right) = \psi_{B|A}(\rho_{A}). \nonumber
\end{align}
With the same argument it can be shown that $\mbox{Tr}_{C}\left[ (\tilde{\psi}_{B|B} \otimes \tilde{\phi}_{C|C}) \rho_{A} \right] = \phi_{C|A}(\rho_{A})$. In conclusion, not only the maps defined in eqs.~\eqref{Eq.ExampleFernandinho1} and~\eqref{Eq.ExampleFernandinho2} are compatible, but also the factorized map $\tilde{\psi}_{B|B} \otimes \tilde{\phi}_{C|C}$ is their compatibilizer. 

Intuitively, two ingredients are necessary to run the second example. First, we have defined $\mathcal{H}_{A}$ to be the tensor product of $\mathcal{H}_{B}$ and $\mathcal{H}_{C}$; we have opened room for parallel and independent action of the two channels. Secondly, we have explored this room and defined $\psi_{B|A}$ as if it was acting only on $B$ and, likewise, $\phi_{C|A}$ as if it was acting only on $C$. Even though there might be correlations between $B$ and $C$ codified in $\rho_{A}$, the knowledge of what is happening in $A$ renders $B$ and $C$ conditionally independent~\cite{LS13,LP08}. Basically, in this example, the compatibility of the two maps is a direct consequence of these two points. The tensor structure of $A$ opened up space for enlarging the maps, and their conditional independence was crucial to remove either one on-demand.  

\textbf{Example 3: } This last example shows that, in contrast with the resource theory of entanglement~\cite{PV07, HHH09}, there is no catalysis~\cite{JP99, Fritz2017} of compatibility . More precisely, we will see that if $\psi_{B|A} \otimes\chi_{B'|A'}$ is compatible with $\phi_{C|A} \otimes \chi_{B''|A'}$, then $\psi_{B|A}$ and $\phi_{C|A}$ are also compatible. As a matter of fact, compatibility between $\psi_{B|A} \otimes\chi_{B'|A'}$ and $\phi_{C|A} \otimes \chi_{B''|A'}$ implies the existence of $\tilde{\theta}_{BB'CB''|AA'}$ such that
\begin{equation}
\psi_{B|A} \otimes \chi_{B'|A}(\rho_{AA'}) = \mbox{Tr}_{CB''}\left[ \tilde{\theta}_{BB'CB''|AA'}(\rho_{AA'}) \right]
\label{Eq.ExampleNoCatalysis1}
\end{equation}
and
\begin{equation}
\phi_{C|A} \otimes \chi_{B''|A}(\rho_{AA'}) = \mbox{Tr}_{BB'}\left[ \tilde{\theta}_{BB'CB''|AA'}(\rho_{AA'}) \right]\label{Eq.ExampleNoCatalysis2}
\end{equation}
for every $\rho_{AA'}$. We define a compatibilizer for $\psi_{B|A}$ and $\phi_{C|A}$ using eqs.~\eqref{Eq.ExampleNoCatalysis1} and~\eqref{Eq.ExampleNoCatalysis2} above as:
\begin{equation}
\theta_{BC|A}(\rho_{A}):=\mbox{Tr}_{B'B''}\left[ \tilde{\theta}_{BB'CB''|AA'}\left(\rho_{A} \otimes \frac{\id_{A'}}{d_{A'}} \right) \right],
\label{Eq.ExampleNoCatalysisDefiningTheta}
\end{equation}
for each $\rho_{A}$, where d$_{A'}$ is the dimension of $\mathcal{H}_{A'}$. To check that $\theta_{BC|A}$ is a valid compatibilizer, it suffices to trace either system out of eq.~\eqref{Eq.ExampleNoCatalysisDefiningTheta}:
\begin{align}
\forall \,\, \rho_{A}& \in \mathcal{D}(\mathcal{H}_{A}): \mbox{Tr}_{C}\left[  \theta_{BC|A}(\rho_{A}) \right]= \nonumber \\
&=\mbox{Tr}_{CB'B''}\left[ \tilde{\theta}_{BB'CB''|AA'}\left(\rho_{A} \otimes \frac{\id_{A'}}{d_{A'}} \right) \right] \nonumber \\
&= \mbox{Tr}_{B'CB''}\left[ \tilde{\theta}_{BB'CB''|AA'}\left(\rho_{A} \otimes \frac{\id_{A'}}{d_{A'}} \right) \right]  \\
&=    \mbox{Tr}_{B'}\left[ \psi_{B|A} \otimes \chi_{B'|A'}\left( \rho_{A} \otimes \frac{\id_{A'}}{d_{A'}}  \right) \right]  \nonumber \\
&=    \psi_{B|A}(\rho_{A}) \mbox{Tr}\left[ \chi_{B'|A'}\left( \frac{\id_{A'}}{d_{A'}}  \right) \right] \nonumber \\
&=    \psi_{B|A}(\rho_{A}) \nonumber.
\end{align}
Similarly, the same argument leads to $\mbox{Tr}_{B}\left[  \theta_{BC|A}(\rho_{A}) \right]= \phi_{C|A}(\rho_{A})$ for every $\rho_{A}$. To sum up, our construction says that it does not matter if an auxiliary channel is plugged into $\psi_{B|A}$ and $\phi_{C|A}$. If those channels are not compatible to begin with, they will remain incompatible even with the help of an external channel $\chi_{B'|A}$. In other words, in the sense of~\cite{JP99, Fritz2017} there is no catalysis of compatibility.
 
As will become clearer, the examples we explore here have only been chosen because they will be useful to investigate the connection between compatibility and divisibility later on. Nonetheless, we hope that they have been sufficiently illustrative to the reader who is unfamiliar with the field. Compatibility constitutes an active research topic, and this short sub-section does not represent an exhaustive review of it.  For recent developments on compatibility we suggest~\cite{HM17} and~\cite{GPS20}.

\section{Compatibility and divisibility are not connected in general}\label{Sec.Conjectures}

As discussed in sec.~\ref{Sec.Introduction}, divisibility and compatibility can be both associated with the notion of simulability. 
Although the simulability encoded in def.~\ref{Def.Compatibility} comes, intuitively, in a parallel form and the simulability of def.~\ref{Def.DivisibilityTwoMaps} in a sequential form, they both consist of writing down a map in terms of others. This is exactly the original motivation behind our initial conjectures. 

As we will see in sub-sec.~\ref{SubSec.OriginalConjectures}, these conjectures do not hold true in general, and sub-sec.~\ref{SubSec.DiscussionWrong} addresses what is wrong with them. The main reason lies indeed in the conflict between the sequential and parallel characters of the two notions. 

Nonetheless, it is remarkable that if we restrict our argument to a certain class of important quantum channels, we can connect the two concepts. Sec.~\ref{Sec.ConjecturesProofs} is addressing this point. For now, we think it is instructive to discuss our original guesses and the counter-examples to them.   

\subsection{Original Conjectures and Counter-Examples}\label{SubSec.OriginalConjectures}

Let us start with our first attempt. Conjecture 1 states that divisibility should imply compatibility. In other words, it says that parallel simulability could be turned into sequential simulability.

\textbf{Conjecture 1: } If $\psi_{B|A}$ divides $\phi_{C|A}$, then it must be the case that $\psi_{B|A}$ is compatible with $\phi_{C|A}$.

It is not so difficult to come up with a counter-example for this statement. As it turns out, we have already discussed a counter-example for this conjecture before, more precisely in sec.~\ref{SubSec.Compatibility}. Referring back to example 1, it suffices to take $\mathcal{H}_{A}=\mathcal{H}_{B}=\mathcal{H}_{C}$ and $\psi_{B|A}=\mbox{id}_{A}=\phi_{C|A}$. Even though $\psi_{B|A}$ divides $\phi_{C|A}$ -- with the quotient map being nothing but $\theta_{C|B}=\mbox{id}_{A}$, there is no compatibilizer for this case. In conclusion, our first conjecture does not hold true in general. 

Because we are building our argument upon the non-universal broadcasting theorem, we can formulate, in this context, a follow-up question: what if one requires more structure on $A$? For instance, if we restricted ourselves, and considered divisibility and compatibility only on $\mathcal{A}$ , a commuting sub-algebra of $\mathcal{D}(\mathcal{H}_{A})$, then in this case our conjecture would follow~\cite{CHMT19}. As we will be focusing on general properties of channels, in this work we will not approach restricted domains for the channels. We leave this point open for further investigations.

Our second conjecture regards the converse direction. It explores the possibility of having sequential-like simulability as a building block for parallel simulability.

\textbf{Conjecture 2: } If $\psi_{B|A}$ is compatible with $\phi_{C|A}$, then it must be the case that $\psi_{B|A}$ divides $\phi_{C|A}$.

As we had already anticipated, this conjecture also does not hold true. To show this we will use again a previous example. We have learned from Example 2, in sec.~\ref{Def.Compatibility}, how to construct compatible maps $\psi_{B|A}$ and $\phi_{C|A}$ whose compatibilizer is $\tilde{\psi}_{B|B}$ and $\tilde{\phi}_{C|C}$. It suffices, for instance, to consider the case where $\tilde{\psi}_{B|B}$ is the completely depolarizing map~\cite{NC10} and $\tilde{\phi}_{C|C} = \mbox{id}_{C}$. If there existed a quotient map $\theta_{C|B}$ for this example, it would be a many-to-one relation, and therefore not even a function.  

In conclusion, both conjectures do not hold true at this level of generality, and we will try and explain the reasons why in the section below.

\subsection{Discussion: what went wrong?}\label{SubSec.DiscussionWrong}

 To begin with, conjecture 1 expresses the idea that divisible channels can be seen as a realization of a larger channel that allows to recover the original channels by the appropriate tracing out. 
This is indeed a too strong claim. First, note that if we assume divisibility, the quotient channel has, in general, a preferred direction; $\psi_{B|A}$ dividing $\phi_{C|A}$ implies a channel from $B$ to $C$ such that $\phi_{C|A}=\theta_{C|B} \circ \psi_{B|A}$. In this sense, not only the simulability is a sequential-like simulability, but also it has a preferred direction. In opposition, because of the parallel character, compatibility is more symmetrical. This very aspect also plagues conjecture 2, as we will see in the next paragraph. Secondly, as we pointed out before, the set of channels considered in our conjecture is too broad and too general. We mentioned that if one is willing to pay the cost of restricting the domain of the involved channels to something smaller and with more structure, it is the case that the trade-off must secure a truth value for the statement. Although we think this is a point deserving further investigation, we will not do so here, as we are more concerned with which properties of the channels that can make the conjectures to be true. 

Conjecture 2 not only suffers from the same symmetry v. asymmetry issue already discussed before, but also from the tension between divisibility and the parallel character defining compatibility. The symmetry v. asymmetry tension manifested here is due to the lack of preferred direction of compatible channels. With no further assumptions, if compatibility implied $\psi$ dividing $\phi$, it should also be true that $\phi$ divided $\psi$, and we have already discussed a case ruling out this possibility. The second point, that of exploring the parallel simulability of the channels, is what we tried and explored in example 2. It shows that independent, or conditionally independent, channels might be compatible as well as not-divisible for the very same reason: their action is, broadly speaking, in parallel. Once again, although we think that a connection between conditional independent channels and compatibility is expected, we will not investigate any further their relationship in this work. 

We have laid down some reasons why our initial conjectures are false. Plus, we have also mentioned two possible avenues for further investigations. Mathematically, we are missing a piece that would make the connection we want to establish possible. We find it in the next section. 

\section{Main Results and Proofs}\label{Sec.ConjecturesProofs}

We have established that neither divisibility implies compatibility nor vice versa. In this section we show that if we consider a smaller and very well-studied family of quantum channels it is possible to salvage the argument and remarkably connect divisibility with compatibility.

The concept we are missing is that of degradable channels~\cite{SSW08,CRS08,BDHM10, Holevo2007}.  We will define and motivate this family of quantum maps in sub-sec.~\ref{SubSec.DefDegradability}. After that, in sub-sec.~\ref{SubSec.ModifiedConjectures}, we modify and rewrite our initial conjectures in a form that turns out to be true. Finally, in sub-sec~\ref{SubSec.DiscussionCorrect},  we analyze the reasons for the success in connecting compatibility and divisibility when restricted to degradable channels.

\subsection{Complementary and Compatibility}\label{SubSec.DefDegradability}

We here follow the standard treatment for describing the evolution of an open quantum system: for not being isolated, it interacts with a (multipartite) environment; initially the pair system-environment is described by a product state; following a closed dynamics, system and environment evolve according to a family of unitary maps $\mathcal{U}_{t}$; the state of the system is, at any posterior time, obtained by tracing out the degrees of freedom of the environment~\cite{Watrous2018,BPH15}. The resultant of this process is a family of cptp maps $\{ \psi_{t|0} \}_{t}$ where the environment is always being traced-out:
\begin{equation}
\psi_{t|0} (\rho_{S})=\mbox{Tr}_{E}\left[U^{\ast} (\rho_{S} \otimes \ket{0}\bra{0}) U\right].
\label{Eq.TracingOutEnv}
\end{equation}

What if, instead of tracing-out the environment, we trace out the system? The authors of~\cite{BPH15} have thoroughly explored the consequences of doing this. Here, closely following~\cite{SSW08,CRS08,BDHM10, Holevo2007} we will implement a more modest approach. Referring back to the main topic of this work, the main idea here is to show how one can use this natural way of constructing channels to reconcile divisibility and compatibility.  We begin with a definition~\cite{CRS08}:

\begin{definition}[Complementary Channel]
Let $\psi_{B|A}$ be a quantum channel and 
\begin{equation}
\psi_{B|A}(\cdot)= \mbox{Tr}_{E}\left[ V (\cdot) V^{\ast} \right]
\label{Eq.DefComplementaryChannelStinespring}
\end{equation}
be its Lindblad-Stinespring representation, where $V:\mathcal{H}_{A} \rightarrow \mathcal{H}_{B} \otimes \mathcal{H}_{E}$ is an isometry. The \emph{complementary channel} of $\psi_{B|A}$ is defined as:
\begin{equation}
\psi_{E|A}^{c}(\cdot)= \mbox{Tr}_{B}\left[ V (\cdot) V^{\ast} \right],
\label{Eq.DefComplementaryChannel}
\end{equation}
where the trace now is performed over $\mathcal{H}_{B}$.
\label{Def.ComplementaryChannel}
\end{definition}

Broadly speaking, the complementary channel $\psi^{c}_{E|A}$ describes the evolution of the environment in correspondence to the evolution of the system described by $\psi_{B|A}$.

\textbf{Remark:} There are alternative, equivalent ways to introduce complementary channels. For example, instead of starting from the Lindblad-Stinespring representation, if we had started from a Kraus decomposition $\psi_{B|A}(\cdot)=\sum_{i=1}^{M}K_{i}(\cdot)K_{i}^{\ast}$, the complementary map $\psi_{B|A}^{c}$ to this channel, with this particular Kraus representation, would be:
\begin{equation}
\psi_{E|A}^{c}(\cdot)=\sum_{i,j=1}^{M}\mbox{Tr}\left[ K_{i}^{\ast}K_{j}(\cdot) \right] \ket{j}\bra{i}_{E}.
\label{Eq.RemarkAlternativeDefComplementary}
\end{equation}
The connection between eqs.~\eqref{Eq.DefComplementaryChannel} and~\eqref{Eq.RemarkAlternativeDefComplementary} is established noticing that from an isometry $V:\mathcal{H}_{A} \rightarrow \mathcal{H}_{B} \otimes \mathcal{H}_{E}$ it is possible to define a valid set of Kraus operators $\{K_i:=V \otimes \ket{i}\}_{i=1}^{d_E}$ and, vice versa, that from a set of Kraus operators $\{K_{i}\}_{i=1}^{M}$ it is possible to form an isometry $V:=\sum_{i=1}^{M}K_i \otimes \ket{i}$. 

 The study of complementary channels goes back to the works of A. Holevo~\cite{Holevo2007}, M. B. Ruskai plus collaborators~\cite{CRS08,KMNR05} and I. Devetak with P. Shor~\cite{DS05}. This class of channels play a role in quantum information theory and, remarkably, as we mentioned before, in the foundations of quantum mechanics in the attempt to explain the emergence of an objective macroscopic reality~\cite{BPH15}. Expanding the former role a bit further, in this work, we will use the concept of complementarity as a mechanism to connect compatible with divisible maps. The theorem below (proven in ref.~\cite{HM17}) is the very first to do so.

\begin{thm}
Two quantum channels $\phi_{C|A}$ and $\psi_{B|A}$ are compatible if and only if there exists  a cptp map $\theta_{C|E}$ such that $\phi_{C|A}=\theta_{C|E} \circ \psi_{E|A}^{c}$.
\label{Thm.CompatibilityAndOrdering}
\end{thm}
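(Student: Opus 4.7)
The plan is to prove both directions by exploiting the Stinespring dilation of $\psi_{B|A}$ and of the compatibilizer, together with the essential uniqueness of the Stinespring representation.

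For the easier ($\Leftarrow$) direction, I will assume the existence of $\theta_{C|E}$ with $\phi_{C|A}=\theta_{C|E}\circ\psi_{E|A}^{c}$ and explicitly build a compatibilizer. Let $V:\mathcal{H}_{A}\to\mathcal{H}_{B}\otimes\mathcal{H}_{E}$ be the isometry furnishing the Lindblad-Stinespring form of $\psi_{B|A}$ used in def.~\ref{Def.ComplementaryChannel}. I will propose
\begin{equation}
\theta_{BC|A}(\rho):=(\mathrm{id}_{B}\otimes\theta_{C|E})\bigl(V\rho V^{\ast}\bigr),
\end{equation}
which is manifestly CPTP. Tracing out $C$ and using that $\theta_{C|E}$ is trace-preserving yields $\mathrm{Tr}_{E}[V\rho V^{\ast}]=\psi_{B|A}(\rho)$, while tracing out $B$ commutes with $\mathrm{id}_{B}\otimes\theta_{C|E}$ and gives $\theta_{C|E}(\psi_{E|A}^{c}(\rho))=\phi_{C|A}(\rho)$. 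So eqs.~\eqref{Eq.DefCompatibility1}--\eqref{Eq.DefCompatibility2} hold.

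For the harder ($\Rightarrow$) direction, I will start from an arbitrary compatibilizer $\theta_{BC|A}$ and dilate it to an isometry $W:\mathcal{H}_{A}\to\mathcal{H}_{B}\otimes\mathcal{H}_{C}\otimes\mathcal{H}_{F}$, so that $\theta_{BC|A}(\rho)=\mathrm{Tr}_{F}[W\rho W^{\ast}]$. Marginalising over $C$ shows that $W$ is itself a Stinespring dilation of $\psi_{B|A}$ with environment $\mathcal{H}_{C}\otimes\mathcal{H}_{F}$. The key step is then to invoke the essential uniqueness of Stinespring: there exists an isometry $U:\mathcal{H}_{E}\to\mathcal{H}_{C}\otimes\mathcal{H}_{F}$ with $W=(\mathrm{id}_{B}\otimes U)V$. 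Substituting this into the other marginal gives
\begin{equation}
\phi_{C|A}(\rho)=\mathrm{Tr}_{BF}[W\rho W^{\ast}]=\mathrm{Tr}_{F}\!\left[U\,\psi_{E|A}^{c}(\rho)\,U^{\ast}\right],
\end{equation}
so defining $\theta_{C|E}(\sigma):=\mathrm{Tr}_{F}[U\sigma U^{\ast}]$, which is CPTP by construction, yields $\phi_{C|A}=\theta_{C|E}\circ\psi_{E|A}^{c}$.

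The main obstacle I expect is the bookkeeping around essential uniqueness: a priori $W$ and $V$ could have environments of different dimension, so I will either pad $\mathcal{H}_{E}$ with an ancilla or quote the standard form that produces an isometry $U$ between the two environments whenever both dilations reproduce the same channel on $\mathcal{H}_{B}$. Once that isometry is in hand, everything else is a routine partial-trace calculation, and the two directions together close the proof.
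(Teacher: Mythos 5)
Your proof is correct in both directions, and for the forward implication it takes a genuinely different route from the paper's own argument. The paper's sketch sidesteps Stinespring uniqueness entirely: it dilates the \emph{compatibilizer} as $\theta_{BC|A}(\rho)=\mathrm{Tr}_{E}[V\rho V^{\ast}]$ with $V:\mathcal{H}_{A}\to(\mathcal{H}_{B}\otimes\mathcal{H}_{C})\otimes\mathcal{H}_{E}$, observes that this same $V$ is a dilation of $\psi_{B|A}$ with environment $\mathcal{H}_{C}\otimes\mathcal{H}_{E}$, \emph{defines} the complementary channel relative to that dilation, $\psi^{c}_{CE|A}:=\mathrm{Tr}_{B}[V\rho V^{\ast}]$, and then reads off $\phi_{C|A}=\mathrm{Tr}_{E}\circ\psi^{c}_{CE|A}$, so that the quotient map is nothing but a partial trace. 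The price of that shortcut is that the complementary channel is chosen \emph{after} seeing the compatibilizer, which is precisely why the paper must append the remark that the theorem is tacitly relative to one particular Lindblad--Stinespring dilation of $\psi_{B|A}$. Your argument instead fixes the complementary channel in advance, via the given isometry $V:\mathcal{H}_{A}\to\mathcal{H}_{B}\otimes\mathcal{H}_{E}$, and uses essential uniqueness of Stinespring dilations to intertwine $V$ with the compatibilizer's dilation $W$; this proves the statement for the pre-specified $\psi^{c}_{E|A}$, which is the stronger and cleaner reading of the theorem (essentially the full argument the paper delegates to ref.~\cite{HM17}), at the cost of invoking the uniqueness machinery. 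You also write out the $(\Leftarrow)$ direction explicitly, which the paper's sketch omits; your compatibilizer $(\mathrm{id}_{B}\otimes\theta_{C|E})(V\rho V^{\ast})$ and the two partial-trace computations behind it are correct.

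One bookkeeping item should be fixed, though it does not break the proof. When the environment dimensions mismatch, the padding must go on $\mathcal{H}_{F}$ (the compatibilizer's environment), not on $\mathcal{H}_{E}$: enlarging $\mathcal{H}_{E}$ both alters the channel $\psi^{c}_{E|A}$ that the theorem statement has already fixed, and it worsens, rather than restores, the inequality $\dim\mathcal{H}_{E}\le\dim(\mathcal{H}_{C}\otimes\mathcal{H}_{F})$ needed for an isometry $U:\mathcal{H}_{E}\to\mathcal{H}_{C}\otimes\mathcal{H}_{F}$ to exist. Alternatively, no padding is needed at all if you use the partial-isometry form of uniqueness: relating both $V$ and $W$ to a minimal dilation always yields a partial isometry $\tilde{U}$ with $W=(\mathrm{id}_{B}\otimes\tilde{U})V$, whose initial projection $\tilde{U}^{\ast}\tilde{U}$ dominates the support of every $\psi^{c}_{E|A}(\rho)$; then $\theta_{C|E}(\sigma):=\mathrm{Tr}_{F}[\tilde{U}\sigma\tilde{U}^{\ast}]+\mathrm{Tr}[(I-\tilde{U}^{\ast}\tilde{U})\sigma]\,\tau_{C}$, with $\tau_{C}$ any fixed state, is CPTP and still satisfies $\phi_{C|A}=\theta_{C|E}\circ\psi^{c}_{E|A}$.
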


In other words, the compatibility between $\phi_{C|A}$ and $\psi_{B|A}$ means that $\psi_{E|A}^{c}$ divides $\phi_{C|A}$. This theorem connects the parallel character of compatibility with the sequential character of divisibility (of the complementary channel).
It is also important to notice about thm.~\ref{Thm.CompatibilityAndOrdering} that it restores the symmetry lost in our first attempt. With the cost of considering the complement of a map not the map itself, the theorem says that not only we can write $\phi_{C|A}$ using a manipulated version of $\psi_{B|A}$, but also that the other way round does hold true: we can write $\psi_{B|A}$ via a modification of $\phi_{C|A}$. 

Secondly, within the proof of the thm~\ref{Thm.CompatibilityAndOrdering} it is already present the construction of a possible map making the passage between $\psi_{B|A}^{c}$ and $\phi_{C|A}$. We here provide a sketch of the proof. We refer to ~\cite{HM17} for an alternative and full proof of thm.~\ref{Thm.CompatibilityAndOrdering}. 

Assuming $\psi_{B|A}$ and $\phi_{C|A}$ compatible, their compatibilizer $\theta_{BC|A}$ has a Lindblad-Stinespring form
\begin{equation}
\theta_{BC|A}(\rho_{A})=\mbox{Tr}_{E}\left(V\rho_{a}V^{\ast}\right),
\label{Eq.LSFormCompatiblizer}
\end{equation}
where $V:\mathcal{H}_{A} \rightarrow (\mathcal{H}_{B} \otimes \mathcal{H}_{C}) \otimes \mathcal{H}_{E}$ is an isometry. Now, we can extract an alternative Lindblad-Stinespring form for $\psi_{B|A}$ via eq.~\eqref{Eq.LSFormCompatiblizer}, and use this alternative form to define a complementary map. This direct construction will suffice to conclude the argument. As a matter of fact,
\begin{equation}
\psi_{B|A}(\rho_{A}) = \mbox{Tr}_{C}\left[ \theta_{BC|A}(\rho_{A}) \right] = \mbox{Tr}_{CE}\left[ V \rho_{A} V^{\ast}\right],
\end{equation}
and considering $\mathcal{H}_{C} \otimes \mathcal{H}_{E}$ as a new environment, we can define 
 \begin{equation}
  \psi_{CE|A}^{c}:=\mbox{Tr}_{B}\left[ V \rho_{A} V^{\ast}\right].
 \end{equation}
To conclude,
\begin{align}
\phi_{C|A}(\rho_{A}) &= \mbox{Tr}_{B}\left[ \theta_{BC|A}(\rho_{A}) \right] = \mbox{Tr}_{BE}\left[ V \rho_{A} V^{\ast}\right] \nonumber \\
&= \mbox{Tr}_{E} \circ \mbox{Tr}_{B}\left[ V \rho_{A} V^{\ast}\right] = \mbox{Tr}_{E} \circ \psi_{CE|A}^{c} \nonumber \\
&= \theta_{C|E} \circ \psi_{CE|A}^{c},
\end{align}
by simply rewriting $\theta_{C|E}$ as the partial trace. 

Third, in thm.~\ref{Thm.CompatibilityAndOrdering} we are tacitly assuming the existence of one particular Lindblad-Stinespring dilation for $\psi_{B|A}$. Even though the result holds true for a particular dilation, and its equivalence class, it might be the case that for other distinct dilations the result does not hold true~\cite{HM17}. Although this is an important point to bear in mind, it does not affect nor weaken our main results. It suffices to think of $\psi_{B|A}$ as the quantum channel originating from the interaction with a fixed environment $E$, so its action and, more importantly, its dilation is already naturally given by $\psi_{B|A}(\rho_{A}) = \mbox{Tr}_{CE}\left[ V \rho_{A} V^{\ast}\right]$. This is exactly how the complementary maps were originally introduced in the literature.

Finally, the theorem clearly indicates which sub-collection of cptp maps we must consider to validate the previous conjectures. Intuitively, we will work with those classes of channels in which there is a relation between them and their complementary. 

\subsection{Main Results - (Anti)Degradable and Self-degradable Channels}\label{SubSec.ModifiedConjectures}

Technically speaking, the most natural way to leverage the content of thm.~\ref{Thm.CompatibilityAndOrdering} consists of demanding that $\psi_{B|A}$ and its complement $\psi_{E|A}^{c}$ verify a composition rule. This is exactly the key assumption we adopt to finally put together the concepts of divisible maps and compatibility. At a first sight, that class of maps might seem too restrictive or merely an exercise of mathematical abstraction but, as it turns out, these maps have been thoroughly investigated in the literature~\cite{BDHM10,SRZ16,SSWR17}, or more recently in~\cite{CBZ19}, and have a concrete physical meaning, as we will discuss in sub-sec~\ref{SubSec.DiscussionCorrect}. We begin this section formally defining such class of maps (see fig.~\ref{Fig.DefDegradability}):
\begin{definition}[Degradable Maps]
A quantum channel $\psi_{B|A}$ is called \emph{degradable} when there exists another quantum channel $\lambda_{E|B}$ such that
\begin{equation}
\psi_{E|A}^{c}=\lambda_{E|B} \circ \psi_{B|A},
\label{Eq.DefDegradable}
\end{equation}
where $\psi_{E|A}^{c}$ is complementary to $\psi_{B|A}$. On the other hand, when $\psi_{E|A}^{c}$ is degradable, the original map $\psi_{B|A}$ is called \emph{anti-degradable}.
\label{Def.DegradableMaps}
\end{definition}

\textbf{Remark:} Simply put, a degradable channel is a channel that appropriately divides its complementary channel.

\begin{figure}
\includegraphics[scale=0.25]{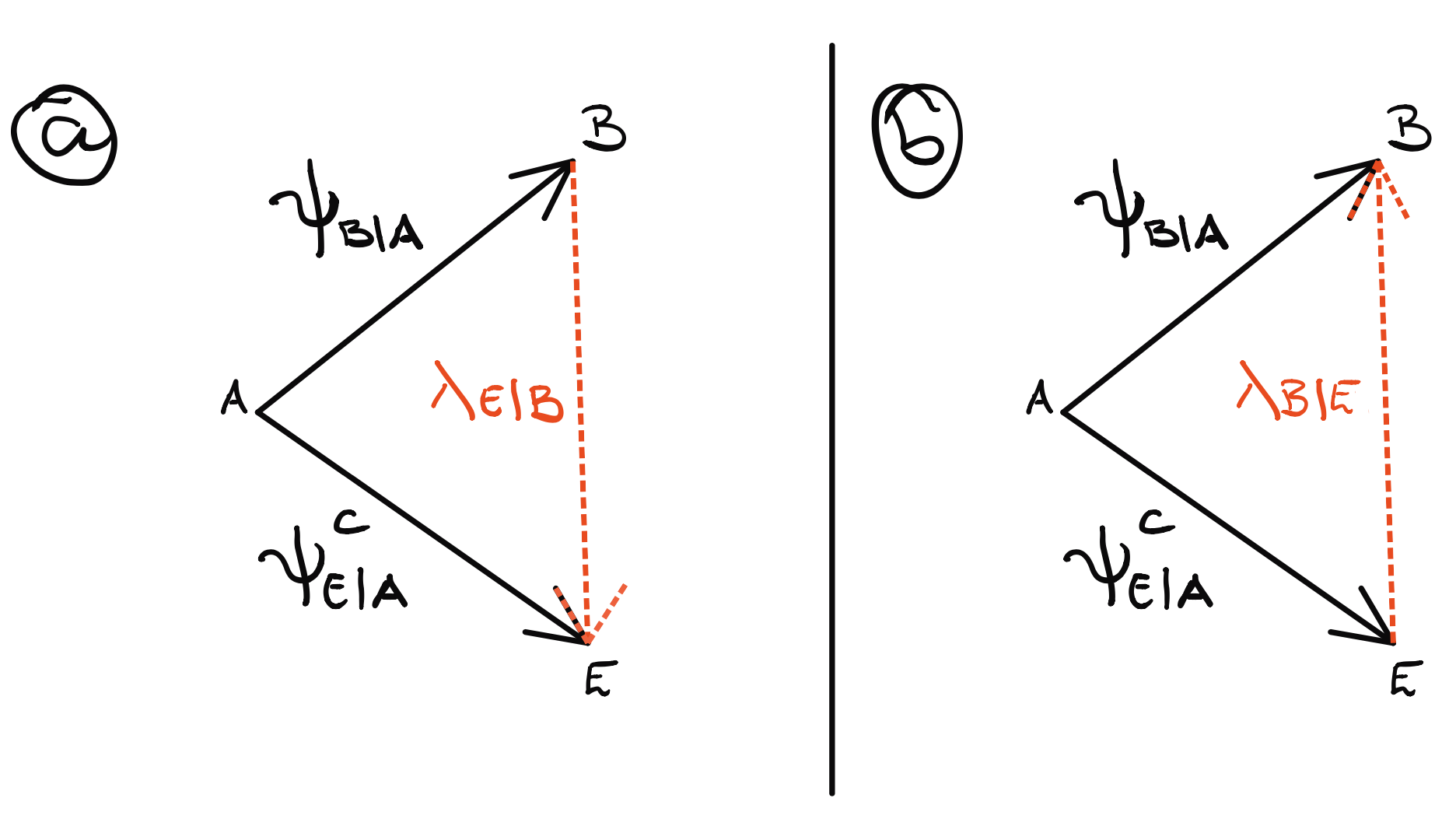}
\caption{Schematic arrow-representation of (a) degradable and (b) anti-degradable maps. On the left, the orange dashed arrow points downwards, from the output space to the environment. On the right, the orange dashed arrow points upwards, from the environment onto the output space. (colours online).\label{Fig.DefDegradability}}
\end{figure}

As we mentioned, the structure of degradable and anti-degradable maps has been deeply explored in the literature before, mainly in the context of quantum information. As an example, it has been shown that, for these channels, the calculation of quantum and private classical capacities (two quantities not even known to be computable in general) simplify massively, to the point that for some degradable channels it is even possible to obtain closed, analytic expressions~\cite{BDHM10,CRS08}. Additionally, even though the set of all degradable channels is not convex, it is known that the set of anti-degradable channels is convex and properly contains all the prepare-and-measure maps~\cite{CRS08}, which implies that anti-degradable maps have positive measure within the set of all channels. In terms of structures, in ref.~\cite{DS05}, the authors have shown that channels having simultaneously diagonalizable Kraus maps are degradable. Additionally, 
approximate notions of degradability and anti-degradability have also been recently explored in the literature ~\cite{SSWR17}.

Albeit being very particular, a special case is when $\lambda_{E|B}$ is the identity channel $\mbox{id}_{E}$. This special case will be useful later in our discussions and we denote it as follows. 

\begin{definition}[Self-Degradable Maps]
A channel $\psi_{B|A}$ is self-degradable (or self-complementary) whenever it coincides with its complementary map. More precisely,
\begin{equation}
\psi_{B|A}=\psi_{E|A}^{c}.
\label{Eq.DefSelfComplementarity}
\end{equation}
\end{definition}
 
One can easily read from eqs.~\eqref{Eq.DefDegradable} and \eqref{Eq.DefSelfComplementarity} that it must be the case that output and environment are the same, in other words $\mathcal{H}_{E}=\mathcal{H}_{B}$. As a simple example, consider also the case where $\mbox{dim}(\mathcal{H}_{A})=\mbox{dim}(\mathcal{H}_{B})=2$. In this case it has been shown that nor the completely depolarizing channel nor any unitary channel is self-complementary ~\cite{SRZ16}. Roughly speaking, we can interpret this argument as to showing that self-complementary channels cannot be too noisy nor too reversible. For qubits, the general structure of these channels  was investigated in ref.~\cite{SRZ16}. Any Kraus decomposition $\{K_1,K_2\}$ of a self-complementary channel must obey:
\begin{align}
&K_1 = 
\begin{pmatrix}
\sin \alpha & 0 \\
0 & \frac{1}{\sqrt{2}}
\end{pmatrix}
\,\,\,
K_2= 
\begin{pmatrix}
0 & \frac{1}{\sqrt{2}} \\
\e^{i\beta}\cos \alpha  & 0
\end{pmatrix}  \mbox{or} \\
&K_1 = 
\begin{pmatrix}
1  & 0 \\
0 & \frac{1}{\sqrt{2}}\sin\alpha
\end{pmatrix}
\,\,\,
K_2= 
\begin{pmatrix}
0 & \frac{1}{\sqrt{2}}\sin\alpha \\
0  & \e^{i\beta}\cos \alpha
\end{pmatrix},  \nonumber
\end{align}
with $\alpha \in [0,\pi]$ and $\beta \in [0,2\pi]$. The most paradigmatic example being the dephasing channel, where $\alpha=0=\beta$.

Given our purposes, we have all the ingredients needed for our main equivalence to hold. Exploiting the extra notions defined in this section, we can finally promote the conjecture of the previous section to a valid theorem. The proof is going to involve a combination of thm.~\ref{Thm.CompatibilityAndOrdering} and def.~\ref{Def.DegradableMaps}.

\begin{thm}
Let $\psi_{B|A}$ and $\phi_{C|A}$ be two quantum channels. The following statements hold true:
\begin{enumerate}
\item [i)] If $\psi_{B|A}$ is degradable and $\psi_{B|A}$ and $\phi_{C|A}$ are compatible, then $\psi_{B|A}$ divides $\phi_{C|A}$.
\item[ii)] If $\psi_{B|A}$ is anti-degradable and $\psi_{B|A}$ divides $\phi_{C|A}$, then $\psi_{B|A}$ and $\phi_{C|A}$ are compatible.
\end{enumerate}
\label{Thm.ModifiedConjectures}
\end{thm}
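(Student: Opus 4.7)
The plan is to derive both parts directly by chaining the two structural results already at our disposal: the compatibility characterization via the complementary channel (theorem~\ref{Thm.CompatibilityAndOrdering}) and the defining composition rules for (anti-)degradable maps (definition~\ref{Def.DegradableMaps}). In both directions the argument reduces to a single substitution once the right objects are named.

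For part (i), I would start from the compatibility hypothesis. By theorem~\ref{Thm.CompatibilityAndOrdering} there exists a CPTP map $\theta_{C|E}$ with
\begin{equation}
\phi_{C|A} = \theta_{C|E} \circ \psi^{c}_{E|A}.
\end{equation}
Degradability of $\psi_{B|A}$ (definition~\ref{Def.DegradableMaps}) supplies a CPTP map $\lambda_{E|B}$ such that $\psi^{c}_{E|A} = \lambda_{E|B} \circ \psi_{B|A}$. Substituting,
\begin{equation}
\phi_{C|A} = \theta_{C|E} \circ \lambda_{E|B} \circ \psi_{B|A} = \bigl(\theta_{C|E} \circ \lambda_{E|B}\bigr) \circ \psi_{B|A},
\end{equation}
and the bracketed map is CPTP (as a composition of CPTP maps), so it plays exactly the role of the quotient map in definition~\ref{Def.DivisibilityTwoMaps}. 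Hence $\psi_{B|A}$ divides $\phi_{C|A}$.

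For part (ii), I would first unpack what anti-degradability gives us: since the complement of the complement of a channel coincides with the channel itself (for the fixed dilation that was singled out in the remark following theorem~\ref{Thm.CompatibilityAndOrdering}), the hypothesis that $\psi^{c}_{E|A}$ is degradable produces a CPTP map $\mu_{B|E}$ with $\psi_{B|A} = \mu_{B|E} \circ \psi^{c}_{E|A}$. Using the divisibility hypothesis $\phi_{C|A} = \theta_{C|B} \circ \psi_{B|A}$ and substituting,
\begin{equation}
\phi_{C|A} = \theta_{C|B} \circ \mu_{B|E} \circ \psi^{c}_{E|A} = \bigl(\theta_{C|B} \circ \mu_{B|E}\bigr) \circ \psi^{c}_{E|A}.
\end{equation}
The bracketed map is a CPTP channel from $\mathcal{H}_{E}$ to $\mathcal{H}_{C}$, so invoking the converse direction of theorem~\ref{Thm.CompatibilityAndOrdering} certifies that $\psi_{B|A}$ and $\phi_{C|A}$ are compatible.

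The only genuinely delicate point, and therefore the main obstacle, is the step in part (ii) that rewrites anti-degradability of $\psi_{B|A}$ as a CPTP factorization of $\psi_{B|A}$ through its complement $\psi^{c}_{E|A}$. One must be careful that the ``complement of the complement'' coincides with $\psi_{B|A}$ only when one commits to a specific Stinespring dilation; this is exactly the caveat flagged by the authors just before stating the theorem, and it is resolved by working with the same fixed dilation used to define $\psi^{c}_{E|A}$ throughout. Once this is in place, both arguments reduce to pure function-composition and nothing further is required.
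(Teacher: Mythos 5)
Your proposal is correct and follows essentially the same route as the paper's own proof: both parts chain theorem~\ref{Thm.CompatibilityAndOrdering} with the factorization supplied by definition~\ref{Def.DegradableMaps} and conclude by a single substitution. If anything, you are slightly more careful than the paper in part (ii), where you explicitly justify rewriting anti-degradability (defined as degradability of the complement) as $\psi_{B|A} = \mu_{B|E}\circ\psi^{c}_{E|A}$ via the complement-of-the-complement identity for a fixed dilation, a step the paper's proof asserts without comment.
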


\begin{proof}

Initially, let us assume that $\psi_{B|A}$ and $\phi_{C|A}$ are compatible. In this case, thm.~\ref{Thm.CompatibilityAndOrdering} ensures that there exists $\theta_{C|E}$ such that $\phi_{C|A}=\theta_{C|E} \circ \psi_{E|A}^{c}$. Now, because $\psi_{B|A}$ is degradable, it is possible to write down its complementary map as $\psi_{E|A}^{c}=\lambda_{E|B} \circ \psi_{B|A}$, with $\lambda_{E|B}$ a cptp map. Hence, $\phi_{C|A}=(\theta_{C|E} \circ \lambda_{E|B}) \circ \psi_{B|A}$ and $\psi$ divides $\phi$.

On the other hand, let us assume now that $\psi_{B|A}$ divides $\phi_{C|A}$. In this case, there exists $\theta_{C|B}$ such that $\phi_{C|A}=\theta_{C|B} \circ \psi_{B|A}$. Now, as we have assumed that $\psi_{B|A}$ is anti-degradable, there exists another cptp map $\lambda_{B|E}$ with $\psi_{B|A}=\lambda_{B|E} \circ \psi_{E|A}^{c}$. Hence, $\phi_{C|A}=(\theta_{C|B} \circ \lambda_{B|E}) \circ \psi_{E|A}^{c}$ and, because of thm.~\ref{Thm.CompatibilityAndOrdering},  $\psi$ is compatible with $\phi$.

\end{proof}

Straightforwardly from thm.~\ref{Thm.ModifiedConjectures}, it follows that with self-degradable channels we can cut out the middle maps $\lambda_{E|B}$ and $\lambda_{B|E}$. In this very particular case compatibility equals divisibility. We formalize it in a corollary.

\begin{corollary}
Let $\psi_{B|A}$ and $\phi_{C|A}$ be two quantum channels. Assume that $\psi_{B|A}$ is self-degradable. The following statements hold true:
\begin{enumerate}
\item[i)] $\psi_{B|A}$ and $\phi_{C|A}$ are compatible.
\item[ii)] $\psi_{B|A}$ divides $\phi_{C|A}$.
\end{enumerate}
\label{Coro.FullEquivalence}
\end{corollary}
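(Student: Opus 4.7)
The plan is to derive the corollary as an immediate specialization of Theorem~\ref{Thm.ModifiedConjectures} by observing that self-degradability is simultaneously a special case of degradability and anti-degradability. Concretely, the hypothesis $\psi_{B|A} = \psi^{c}_{E|A}$ (which also forces $\mathcal{H}_{E} = \mathcal{H}_{B}$) lets us pick $\lambda_{E|B} = \mathrm{id}_{B}$ in Def.~\ref{Def.DegradableMaps}, since trivially $\psi^{c}_{E|A} = \mathrm{id}_{B} \circ \psi_{B|A}$; by the same token $\lambda_{B|E} = \mathrm{id}_{B}$ witnesses anti-degradability via $\psi_{B|A} = \mathrm{id}_{B} \circ \psi^{c}_{E|A}$. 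Thus the hypothesis of both items (i) and (ii) of Thm.~\ref{Thm.ModifiedConjectures} is satisfied simultaneously.

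First I would establish the direction (i)$\Rightarrow$(ii): assuming compatibility, apply item (i) of Thm.~\ref{Thm.ModifiedConjectures} with $\lambda_{E|B} = \mathrm{id}_{B}$ to conclude that $\psi_{B|A}$ divides $\phi_{C|A}$. Tracing through the proof of the theorem, the quotient map that appears is $\theta_{C|E} \circ \lambda_{E|B} = \theta_{C|E}$, i.e.\ precisely the map guaranteed by Thm.~\ref{Thm.CompatibilityAndOrdering}. Next I would establish (ii)$\Rightarrow$(i): assuming $\psi_{B|A}$ divides $\phi_{C|A}$ with some quotient $\theta_{C|B}$, apply item (ii) of Thm.~\ref{Thm.ModifiedConjectures} with $\lambda_{B|E} = \mathrm{id}_{B}$; the resulting factorization $\phi_{C|A} = \theta_{C|B} \circ \psi^{c}_{E|A}$ invokes Thm.~\ref{Thm.CompatibilityAndOrdering} to yield compatibility of $\psi_{B|A}$ and $\phi_{C|A}$.

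Since the proof is a direct substitution into an already proven theorem, there is no substantive obstacle; the only subtlety worth flagging explicitly is the dimensional compatibility $\mathcal{H}_{E} = \mathcal{H}_{B}$ implicit in the self-degradability assumption, which is what makes $\mathrm{id}_{B}$ a legitimate choice for both $\lambda_{E|B}$ and $\lambda_{B|E}$. I would mention this once at the beginning of the argument and then proceed with the two one-line deductions above.
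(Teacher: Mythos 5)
Your proof is correct and follows essentially the same route as the paper: the authors likewise obtain the corollary by specializing Theorem~\ref{Thm.ModifiedConjectures}, noting that self-degradability lets one ``cut out'' the intermediate maps $\lambda_{E|B}$ and $\lambda_{B|E}$, i.e.\ take them to be the identity, so that item (i) of the theorem gives compatibility $\Rightarrow$ divisibility and item (ii) gives the converse. Your explicit remark that $\mathcal{H}_{E}=\mathcal{H}_{B}$ legitimizes the choice $\mathrm{id}_{B}$ is a fine (and welcome) elaboration of what the paper leaves implicit.
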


\textbf{Remark: } To begin with, note that if we had been working within the usual framework of open quantum dynamics, where it is given a discrete family $\mathcal{F}=\{ \psi_{X_{t}|X_{0}}\}_{t}$ of cptp maps, in order to ensure that compatibility implies divisibility of $\mathcal{F}$,  we would have to assume that every member of that family (but the last one) is degradable. Recursive use of thm~\ref{Thm.ModifiedConjectures} would guarantee that, for being two-by-two compatible, the whole dynamics is Markovian. Analogously, anti-degradability plus the usual definition of divisibility would imply compatibility between consecutive time steps. Finally, despite being rather restrictive, the full equivalence expressed by corollary~\ref{Coro.FullEquivalence} neatly bridges the two concepts. Recall that, for qubits, self-complementarity rules-out the simplest cases of reversible and completely depolarizing channels. 

Overall, we have seen that if we assume degradability or anti-degradability we can see compatibility and divisibility within a closer perspective. Mathematically speaking, although thm.~\ref{Thm.ModifiedConjectures} puts an end to the story, it remains to explain why we have been forced to demand for (anti)degradable maps. Moreover, it is important to reason about the physical meaning of the results.

\subsection{Why (anti)degradable maps?}\label{SubSec.DiscussionCorrect}

Let us recall that thm.~\ref{Thm.ModifiedConjectures} assumes (anti)degradability to establish a connection between divisibility and compatibility. It is natural to ponder over what is the role played by (anti)degradability in these statements. In the next proposition we show that (anti)degradability emerges naturally from assuming compatibility and divisibility.

\begin{proposition}
If $\psi_{B|A}$ and $\phi_{C|A}$ are compatible channels such that $\phi_{C|A}=\theta_{C|B} \circ \psi_{B|A}$, for a given cptp map $\theta_{C|B}$, then $\phi_{C|A}$ is anti-degradable.
\label{Prop.ExplainingAntiDegradability}
\end{proposition}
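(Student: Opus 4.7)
The plan is to build a Lindblad--Stinespring dilation of $\phi_{C|A}$ directly out of the compatibilizer $\theta_{BC|A}$ guaranteed by compatibility, and then read off anti-degradability from the divisibility hypothesis in one line. First I would pick an isometry $V:\mathcal{H}_A \to \mathcal{H}_B\otimes \mathcal{H}_C\otimes \mathcal{H}_E$ realizing $\theta_{BC|A}(\rho)=\mbox{Tr}_E[V\rho V^{\ast}]$. Since $\phi_{C|A}=\mbox{Tr}_B\circ\theta_{BC|A}$, the same isometry $V$ is automatically a valid Lindblad--Stinespring dilation of $\phi_{C|A}$ with environment $\mathcal{H}_B\otimes\mathcal{H}_E$, so its complementary channel in this representation is
\begin{equation}
\phi^{c}_{BE|A}(\rho)=\mbox{Tr}_C\!\left[V\rho V^{\ast}\right].
\end{equation}

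The key observation is then that tracing out the extra factor $\mathcal{H}_E$ from $\phi^{c}_{BE|A}$ recovers $\psi_{B|A}$, because $\psi_{B|A}=\mbox{Tr}_C\circ\theta_{BC|A}=\mbox{Tr}_{CE}[V(\cdot)V^{\ast}]=\mbox{Tr}_E\circ\phi^{c}_{BE|A}$. Composing with the divisibility hypothesis $\phi_{C|A}=\theta_{C|B}\circ\psi_{B|A}$ yields
\begin{equation}
\phi_{C|A} \;=\; \theta_{C|B}\circ\mbox{Tr}_E\circ\phi^{c}_{BE|A} \;=\; \bigl(\theta_{C|B}\circ\mbox{Tr}_E\bigr)\circ\phi^{c}_{BE|A}.
\end{equation}
Setting $\lambda_{C|BE}:=\theta_{C|B}\circ\mbox{Tr}_E$, which is manifestly cptp as a composition of cptp maps, we obtain $\phi_{C|A}=\lambda_{C|BE}\circ\phi^{c}_{BE|A}$. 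By def.~\ref{Def.DegradableMaps}, this is exactly the statement that $\phi^{c}_{BE|A}$ is degradable, i.e.\ that $\phi_{C|A}$ is anti-degradable.

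I do not expect a substantive obstacle; the argument is essentially one diagram chase once the correct dilation of $\phi$ is chosen. The one mild subtlety to flag is that complementarity and (anti)degradability are defined with respect to a Lindblad--Stinespring representation, so I should note, as remarked in the discussion following thm.~\ref{Thm.CompatibilityAndOrdering}, that the dilation of $\phi_{C|A}$ inherited from the compatibilizer is the natural one to work with here, and any other dilation is related to it by an isometry on the environment, under which the anti-degradability property is preserved.
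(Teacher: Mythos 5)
Your proof is correct and follows essentially the same route as the paper's: both arguments reduce to the fact that compatibility makes $\psi_{B|A}$ a post-processing of a complement of $\phi_{C|A}$, after which the divisibility hypothesis $\phi_{C|A}=\theta_{C|B}\circ\psi_{B|A}$ gives anti-degradability in one composition. The only difference is that the paper obtains that fact by citing thm.~\ref{Thm.CompatibilityAndOrdering} (with the roles of $\psi$ and $\phi$ exchanged, writing $\psi_{B|A}=\theta_{B|E}\circ\phi^{c}_{E|A}$), whereas you re-derive it explicitly by dilating the compatibilizer and exhibiting the degrading map $\theta_{C|B}\circ\mbox{Tr}_E$ -- i.e.\ you inline the proof of that theorem rather than invoke it.
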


\begin{proof}
To begin with, because $\psi_{B|A}$ and $\phi_{C|A}$ are compatible, there is another cptp map $\theta_{B|E}$ such that $\psi_{B|A} = \theta_{B|E} \circ \phi_{E|A}^{c}$. Now, as $\phi_{C|A}=\theta_{C|B} \circ \psi_{B|A}$, we conclude:
\begin{align}
\phi_{C|A}&=\theta_{C|B} \circ \psi_{B|A} \nonumber \\
& = \theta_{C|B} \circ \theta_{B|E} \circ \phi_{E|A}^{c}.
\end{align}
\end{proof}

Notice that, in prop.~\ref{Prop.ExplainingAntiDegradability}, not only we are getting anti-degradability out of the result, rather than plain degradability, but also note that it is $\phi$ which ends up being anti-degradable. Remember that in thm.~\ref{Thm.ModifiedConjectures} we had to demand either degradability or anti-degradability of $\psi$ to obtain divisibility or compatibility, respectively. 

An alternative version of the above proposition is also possible. Instead of requiring compatibility for the channels $\psi_{B|A}$ and $\phi_{C|A}$, we could have asked for compatibility between the conjugate version of the channels. Similarly to prop.~\ref{Prop.ExplainingAntiDegradability}'s proof, we would have obtained that $\psi$ must be degradable. 
Other variants can also be obtained if we move around the super-index indicating the conjugate channel.

We conclude this section discussing the physical intuition behind our results. To begin with, consider the two diagrams displayed in fig.~\ref{Fig.DefDegradability}. Roughly speaking, they show that degradable maps are those in which the evolution of the environment can be fully determined if we monitor the system. To be a bit more precise, the environment evolved states are fully encoded in the system, in the sense that one can recover any environment evolved state by a suitable, but fixed, physical operation that can be done on the system alone. For anti-degradable maps, a similar picture holds, but with the environment and system papers reversed. Self-degradable maps are, therefore, those in which system and environment are on equal footing, to the extent that we can imagine to do process tomography on the system to obtain the information on the evolution of the environment and vice-versa. Let us focus on this class of maps. 

The content of thm.~\ref{Thm.CompatibilityAndOrdering} is that if $\psi_{B|A}$ and $\phi_{C|A}$ are compatible, the compatibilizer $\theta_{BC|A}$ also stores in itself the information about the environment described by $\psi^{c}_{E|A}$, and vice versa. It is as if the black-box of fig.~\ref{Fig.DefCompatibility} were taken so large that it also encompasses the environment of a Lindblad-Stinespring dilation for $\psi_{B|A}$, and so that $\psi^{c}_{E|A}$ contains the information of $\phi_{C|A}$. In fact,
\begin{equation}\phi_{C|A}=\mbox{Tr}_{B}(\theta_{BC|A})=\theta_{C|E} \circ \psi^{c}_{E|A}.\end{equation} Now, self-degradability says that we can bring $\psi_{B|A}$ back into the game, as environment and system are on equal footing, $\psi^{c}_{E|A}=\psi_{B|A}$. As a consequence, $\psi_{B|A}$ ends up dividing $\phi_{C|A}$, thus containing information about it. On the other hand, assuming that $\psi_{B|A}$ divides $\phi_{C|A}$, with the former being self-degradable, it is possible to write $\phi_{C|A}$ in terms of the environment in $\psi^{c}_{E|A}$. In this way, one can find an appropriate large box $\theta_{BC|A}$ involving the environment $E$ for $\psi_{B|A}$, such that it acts as the compatibilizer for $\psi_{B|A}$ and $\phi_{C|A}$. In brief, self-degradability bridges in both directions the evolution of environment and system (up to a function), and this is the very key point in equating compatibility and divisibility, as we can make the compatibilizer large enough to accommodate the environment.

\begin{figure*}
\includegraphics[scale=0.4]{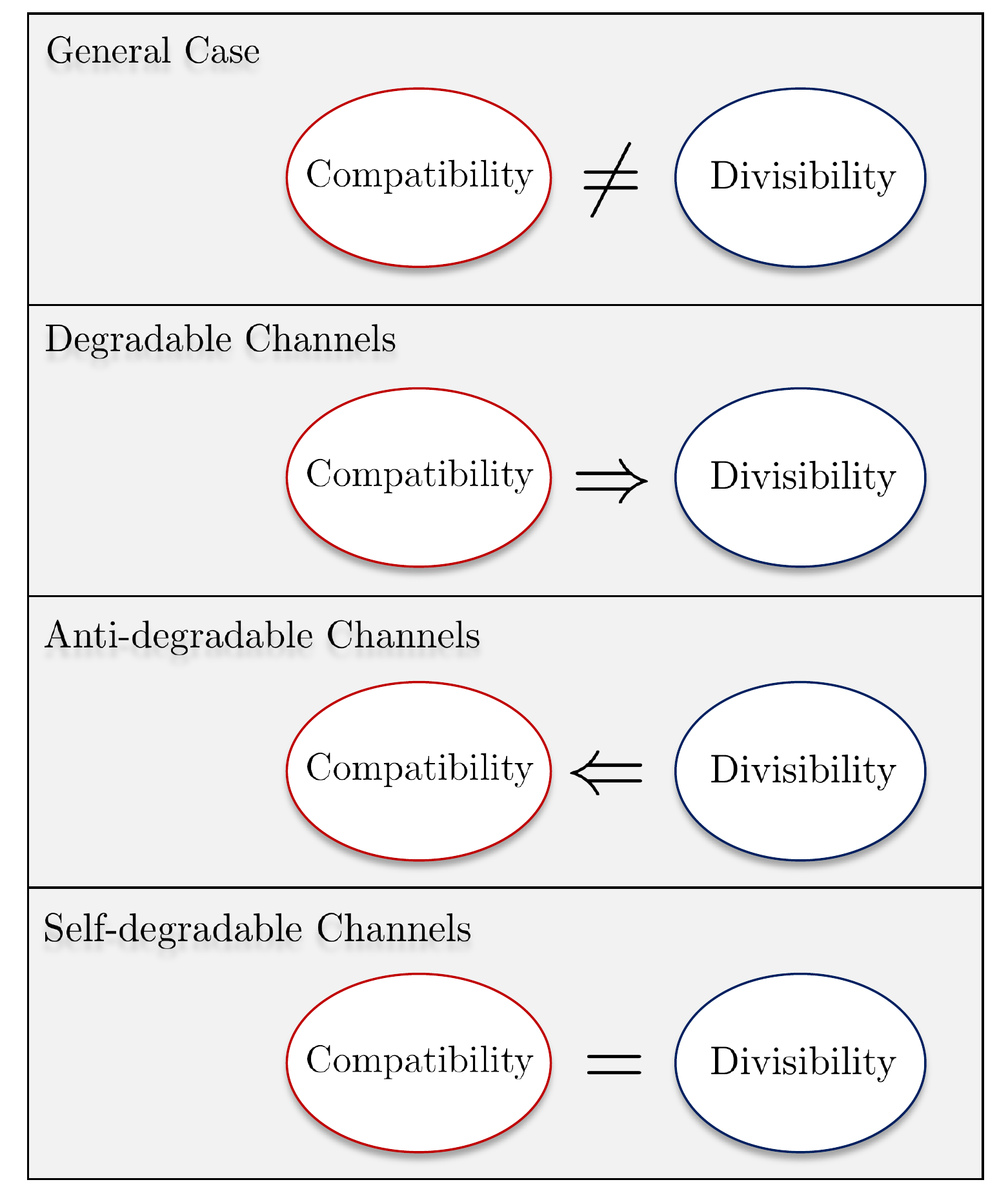}
\caption{Concise summary of results. With no restrictions on the class of channels, the difference between the parallel character of compatibility and the sequential character of divisibility prevails and the two notions are not related. If the channel $\psi_{B|A}$ is degradable and $\psi_{B|A}$ and $\phi_{C|A}$  are compatible, then $\psi_{B|A}$ divides $\phi_{C|A}$. If the channel $\psi_{B|A}$ is anti-degradable and $\psi_{B|A}$ divides $\phi_{C|A}$, then $\psi_{B|A}$ and $\phi_{C|A}$ are compatible. If $\psi_{B|A}$ is self-degradable, it is equivalent to state that $\psi_{B|A}$ divides $\phi_{C|A}$ and that $\psi_{B|A}$ and $\phi_{C|A}$ are compatible.}
\label{Fig.Summary}
\end{figure*}

\section{Conclusions}\label{Sec.Conclusion}

In this work we have established a connection between compatibility and divisibility of cptp maps (summarized in fig.~\ref{Fig.Summary}). We have shown that for self-degradable maps there is a direct equivalence between the two concepts: compatibility implies divisibility and vice versa. Slightly relaxing this hypothesis, compatible and degradable maps are divisible. Furthermore, divisible and anti-degradable maps are compatible. These implications and equivalences do not hold true if we remove (anti)degradability from the body of hypothesis, as we have managed to show that neither compatibility directly implies divisibility nor that divisibility directly implies compatibility in general. Interestingly, the latter is a consequence of the non-broadcasting theorem.

Apart from some cases, it is known that determining whether or not a certain map, or a family of maps, is (anti)degradable is not a trivial task. However, if one knows beforehand that degradability holds, thm~\ref{Thm.ModifiedConjectures} can be used to determine divisibility with a number of SDPs that scales with the size of the family under consideration. Recall that from an information theoretical perspective, the authors of ref.~\cite{DNC21} have come up with an algorithm that must implement an infinite number of SDPs to determine the divisibility of a single pair of maps. With extra information, our results might be useful in this case.

There are, of course, open questions that originate from this work. A natural one is whether the class of self-degradable channels identifies the class of channels for which compatibility equals divisibility. This question alone motivates further research on this special class of channels. To name other open challenges, consider for instance the argument of example 3. There we constructed a situation where the maps are compatible and not-divisible. Our entire construction was based upon the intuition that conditionally independent maps are not divisible. Although this makes sense, and although it works for that example, we still lack a general mathematical proof of this statement. We can push this direction even further: it is still unknown whether or not compatible maps possess at least one compatibilizer respecting conditional independence. This will also be explored in a future work.

Finally, we recognize the limitations of our work. The broad conjectures 1 and 2 do not hold true, and the theorem~\ref{Thm.ModifiedConjectures} and Corollary~\ref{Coro.FullEquivalence} are quite restrictive. Nonetheless, we emphasize how our work puts together two concepts that, up to this day, had always been thought of being separate. Each of these concepts has its own toolbox, and we believe that our work can help to pave-down a new road allowing for an interplay  of these toolboxes. We have also been able to identify open questions that should be explored in future research. 

\begin{acknowledgments}
Cristhiano Duarte wishes to thank Nadja Bernardes, Fernando de Melo, Matt Leifer, Zolt\'{a}n Zimbor\'{a}s and Marcelo Terra Cunha for all the extremely useful discussions. Part of this work was carried out in a transition period and CD thanks Rafael Chaves and the directors of the International Institute of Physics for all the support. R. Drumond acknowledges the support from Brazilian agencies Coordena\c{c}\~ao de Aperfei\c{c}oamento de Pessoal de N\'{i}vel Superior-Finance Code 001, and FAPEMIG. This paper is a result of the Brazilian National Institute of Science and Technology on Quantum Information. L. Catani acknowledges the support by the Fetzer Franklin Fund of the John E.\ Fetzer Memorial Trust and by the Army Research Office (ARO) (Grant No. W911NF-18-1-0178). This research was supported by the Fetzer Franklin Fund of the John E.\ Fetzer Memorial Trust and by grant number FQXi-RFP-IPW-1905 from the Foundational Questions Institute and Fetzer Franklin Fund, a donor advised fund of Silicon Alley Community Foundation.  
 
\end{acknowledgments}

\bibliography{list_of_references}
\end{document}